\documentclass[conference]{IEEEtran}
\IEEEoverridecommandlockouts
\usepackage{cite}
\usepackage{amsmath,amssymb,amsfonts}\usepackage{bm}
\usepackage{algorithm}
\usepackage{algpseudocode}
\usepackage{graphicx}
\usepackage{textcomp}
\usepackage{xcolor}
\usepackage{optidef}
\usepackage[acronym]{glossaries}
\makeglossaries

\newacronym[type=\acronymtype]{ieee}{IEEE}{Institute of Electrical and Electronics Engineers}
\newacronym[type=\acronymtype]{ml}{ML}{Machine Learning}
\newacronym[type=\acronymtype]{lmi}{LMI}{Linear Matrix Inequality}
\newacronym[type=\acronymtype]{roa}{ROA}{Region of Attraction}
\newacronym[type=\acronymtype]{ts}{TS}{Takagi-Sugeno}
\newacronym[type=\acronymtype]{lti}{LTI}{Linear Time Invariant}
\newacronym[type=\acronymtype]{ltv}{LTV}{Linear Time Varying}
\newacronym[type=\acronymtype]{lpv}{LPV}{Linear Parameter Varying}
\newacronym[type=\acronymtype]{ldi}{LDI}{Linear Differential Inclusion}
\newacronym[type=\acronymtype]{di}{DI}{Differential Inclusion}
\newacronym[type=\acronymtype]{pldi}{PLDI}{Polytopic Linear Differential Inclusion}

\newacronym[type=\acronymtype]{lp}{LP}{Linear Programming}
\newacronym[type=\acronymtype]{qp}{QP}{Quadratic Programming}
\newacronym[type=\acronymtype]{qcqp}{QCQP}{Quadratically Constrained Quadratic Programming}
\newacronym[type=\acronymtype]{socp}{SOCP}{Second Order Cone Programming}
\newacronym[type=\acronymtype]{sdp}{SDP}{Semidefinite Programming}

\newacronym[type=\acronymtype]{sos}{SOS}{Sum of Squares}
\newacronym[type=\acronymtype]{psd}{PSD}{Positive Semidefinite}
\newacronym[type=\acronymtype]{pd}{PD}{Positive Definite}

\newacronym[type=\acronymtype]{ode}{ODE}{Ordinary Differential Equations}
\newacronym[type=\acronymtype]{are}{ARE}{Algebraic Riccati Equation}
\newacronym[type=\acronymtype]{mf}{MF}{Membership Function}
\newacronym[type=\acronymtype]{mpc}{MPC}{Model Predictive Control}
\newacronym[type=\acronymtype]{mhe}{MHE}{Moving Horizon Estimation}
\newacronym[type=\acronymtype]{clf}{CLF}{Control Lyapunov Function}
\newacronym[type=\acronymtype]{cbf}{CBF}{Control Barrier Function}

\usepackage{amsthm} % This package provides the 'lemma' environment

\newtheorem{theorem}{Theorem}

\newtheorem{definition}{Definition}

\newtheorem{remark}{Remark}
\newcommand{\innerProduct}[2]{\langle #1, #2 \rangle} % \innerProduct{x}{y}

\newcommand{\innerS}[1]{\langle #1 \rangle_{s}} % \innerS{x}

\newcommand{\matrixTwoTwo}[4]{%   \matrixTwoTwo{a}{b}{c}{d}
  \begin{bmatrix}
    #1 & #2 \\
    #3 & #4
  \end{bmatrix}%
}

\newcommand{\matrixTwoOne}[2]{%     \matrixTwoOne{a}{b}
  \begin{bmatrix}
    #1 \\
    #2
  \end{bmatrix}%
}

\def\BibTeX{{\rm B\kern-.05em{\sc i\kern-.025em b}\kern-.08em
    T\kern-.1667em\lower.7ex\hbox{E}\kern-.125emX}}
\begin{document}

% Improved Estimation of Regions of Attraction for Nonlinear Systems via Coordinate-Transformed TS Models
% \title{Improved Estimation of Regions of Attraction for Nonlinear Systems  \\ via Coordinate-Transformed TS Models \\
% }
% \title{Estimation of Regions of Attraction for Nonlinear Systems via Coordinate-Transformed TS Models
% \thanks{\textcolor{red}{XYZ}}
% }
\title{Estimation of Regions of Attraction for Nonlinear Systems via Coordinate-Transformed TS Models} %  and Quadratic Lyapunov Functions

% \title{Estimation of Regions of Attraction for Nonlinear Systems via Coordinate-Transformed TS Models
% \thanks{\textcolor{red}{This work was supported in part by the National Science Foundation (NSF) under Grant 2240512, in part by the Air Force Office of Scientific Research (AFOSR) under Grant FA9550-22-1-0476, and in part by the U.S. Department of Transportation under Grant 69A3552348327 for the CARMEN+ University Transportation Center.}}
% }

\author{
\normalsize{
\begin{tabular}{ccc}
    \begin{tabular}{c}
        Artun Sel\\
        Electrical and Computer Engineering\\
        The Ohio State University\\
        Columbus, OH 43210, USA\\
        artunsel@ieee.org
    \end{tabular}
    &
    \begin{tabular}{c}
        Mehmet Koruturk\\
        Electrical and Computer Engineering\\
        Virginia Tech\\
        Blacksburg, VA 24061, USA\\
        mkoruturk@vt.edu
    \end{tabular}
    &
    \begin{tabular}{c}
        Erdi Sayar\\
        Robotics, Artificial Intelligence \\
        and Real-time Systems\\
        Technical University of Munich \\
        Munich, Germany\\ % Federal Republic of Germany
        erdi.sayar@tum.de
    \end{tabular}
\end{tabular}
}
}

% \author{
% \IEEEauthorblockN{Artun Sel}
% \IEEEauthorblockA{\textit{Electrical and Computer Engineering} \\
% \textit{The Ohio State University}\\
% Columbus, OH 43210, USA \\
% artunsel@ieee.org}
% \and
% \IEEEauthorblockN{Mehmet Koruturk}
% \IEEEauthorblockA{\textit{Electrical and Computer Engineering} \\
% \textit{Virginia Tech}\\
% Blacksburg, VA 24061, USA \\
% artunsel@ieee.org}
% \and
% \IEEEauthorblockN{Artun Sel}
% \IEEEauthorblockA{\textit{Electrical and Computer Engineering} \\
% \textit{The Ohio State University}\\
% Columbus, OH 43210, USA \\
% artunsel@ieee.org}
% }

% \IEEEauthorblockN{Uygar Gunes}
% \IEEEauthorblockA{\textit{Electrical and Electronics Engineering} \\
% \textit{TOBB ETU}\\ %  University of Economics and Technology
% Ankara, The Republic of Turkey \\
% uygargunes@etu.edu.tr}

% \IEEEauthorblockN{Cosku Kasnakoglu}
% \IEEEauthorblockA{\textit{Electrical and Electronics Engineering} \\
% \textit{TOBB ETU}\\ %  University of Economics and Technology
% Ankara, The Republic of Turkey \\
% kasnakoglu@etu.edu.tr}

\maketitle

% ---- BEGIN ABSTRACT ----
\begin{abstract}
This paper presents a novel method for estimating larger \glspl{roa} for continuous-time nonlinear systems modeled via the \gls{ts} framework. While classical approaches rely on a single TS representation derived from the original nonlinear system to compute an \gls{roa} using Lyapunov-based analysis, the proposed method enhances this process through a systematic coordinate transformation strategy. Specifically, we construct multiple TS models, each obtained from the original nonlinear system under a distinct linear coordinate transformation. Each transformed system yields a local \gls{roa} estimate, and the overall \gls{roa} is taken as the union of these individual estimates. This strategy leverages the variability introduced by the transformations to reduce conservatism and expand the certified stable region. Numerical examples demonstrate that this approach consistently provides larger \glspl{roa} compared to conventional single-model \gls{ts}-based techniques, highlighting its effectiveness and potential for improved nonlinear stability analysis.

\end{abstract}

\begin{IEEEkeywords}
\gls{ts} models, \gls{roa}, coordinate transformation, nonlinear stability, \gls{lmi}.
\end{IEEEkeywords}      % or \input{sections/abstract} if it lives in sections/
% ---- END ABSTRACT ----

% ---- BEGIN SECTION-1 ----

\section{Introduction}  \label{sec:Introduction}

% Paragraph-1
The analysis and control of nonlinear systems remains a fundamental challenge in modern control theory \cite{Kurkcu18:Disturbance,Kurkcu18:DisturbanceUncertainty,Kurkcu19:Robust}. 
Especially with the recent developments in autonomous aerial vehicles with their applications in remote sensing \cite{Hind22:Assessment} and cyber security issues in communications \cite{Etcibasi24:Coverage,Mohammadi25:Detection,Mohammadi25:GPS,Mohammadi25:DetectionICAIC,Ahmari25:AData,Ahmari25:Evaluating} and energy systems
\cite{Agrawal25:Performance,Himmelstoss25:Floating,Koch25:Qualitative,Albasheri25:Energy,Ekawita25:Investigating,Pham25:Potential}, the safety considerations are becoming more relevant. Additionally, the stability concerns may appear in computational models \cite{Sel25:LLMs,Jin24:Democratizing,Sel24:Skin,Sel24:Algorithm} and optimization algorithms \cite{Gu25:Safe,Gu24:Balance,Jin23:OnSolution,Sel23:Learning,Tawaha23:Decision,Khattar23:CMDPWithinOnline,Coskun22:Magnetic} that are especially important in estimation problems related to aerospace industry \cite{Pourtakdoust23:Advanced,Nasihati22:Satellite,Nasihati21:OnLine,Nasihati19:Autonomous}. Among various methodologies developed to address this, the \gls{ts} fuzzy modeling framework has proven to be particularly effective. It provides a systematic way to approximate nonlinear dynamical systems through a convex combination of linear submodels, enabling the application of powerful \gls{lmi} techniques for controller design and stability verification. When constructed via the sector nonlinearity approach, \gls{ts} models preserve local equivalence to the original nonlinear dynamics within a specified compact domain, making them a practical tool for analyzing local stability properties \cite{Guerra18:LMI_Based_Observer}.

A key application of \gls{ts} modeling lies in the estimation of the \gls{roa} around an equilibrium point, often the origin. Traditional \gls{ts}-based stability analysis typically involves constructing a single \gls{ts} model of the nonlinear system and verifying local or global stability conditions using a common Lyapunov function. These methods, while computationally tractable, often yield conservative \gls{roa} estimates due to their reliance on a fixed coordinate representation and restrictive Lyapunov structures \cite{LopezEstrada17:LMI_based_fault}.

To mitigate these limitations, various enhancements have been proposed, such as the use of nonquadratic Lyapunov functions, piecewise or parameter-dependent functions, and membership-function-dependent relaxations. However, the potential of exploiting multiple coordinate representations of the system to enrich the stability analysis remains largely unexplored \cite{DellaRossa20:MaxMinLyapunov}.

In this work, we introduce a novel approach to expand the estimated \gls{roa} by systematically transforming the original system through a family of linear coordinate changes. Each transformed system yields a distinct \gls{ts} model, and an individual \gls{roa} is computed via standard \gls{lmi}-based techniques. The final \gls{roa} is then obtained as the union of these individual estimates. This framework leverages the structural variation introduced by coordinate transformations to reduce the conservativeness inherent in single-model methods \cite{Sotiropoulos18:Causality,Igarashi20:ARobust}.

We demonstrate through a representative example that our method produces significantly larger \gls{roa} estimates compared to conventional approaches, while preserving computational tractability. The proposed method offers a promising direction for extending the applicability and effectiveness of \gls{ts}-based stability analysis for nonlinear systems.

The rest of the paper is organized as follows: Section~\ref{sec:Preliminaries} introduces the preliminaries.
Section~\ref{sec:ROA_TS} reviews the standard \gls{ts} modeling and \gls{roa} computation framework. Section~\ref{sec:ROA_CoC} presents the proposed method based on coordinate transformation and \gls{roa} union. Section~\ref{sec:CONCLUSIONS} concludes the paper and discusses future directions.

% ---- END SECTION-1 ----

% ---- BEGIN SECTION-2 ----

\section{Preliminaries}  \label{sec:Preliminaries}

% =======================================================================

% \begin{table}[htbp]     \label{tab:symbol_definitions}
%     \centering
%     \caption{Acronyms}
%     \begin{tabular}{|p{1cm}|p{6cm}|}
%         \hline
%         \textbf{Symbol} & \textbf{Definition} \\
%         \hline
%         $LMI$ & Linear Matrix Inequality \\
%         \hline
%         $SDP$ & Semi Definite Program \\
%         \hline
%         $LP$ & Linear Program \\
%         \hline
%         $QP$ & Quadratic Program \\
%         \hline
%         $QCQP$ & Quadratically Constrained Quadratic Program \\
%         \hline
%         $SOCP$ & Second Order Cone Program \\
%         \hline
%         $ODE$ & Ordinary Differential Equations \\
%         \hline
%         $LTI$ & Linear Time Invariant \\
%         \hline
%         $LTV$ & Linear Time Varying \\
%         \hline
%         $LPV$ & Linear Parameter Varying \\
%         \hline
%         $ARE$ & Algebraic Riccati Equation \\
%         \hline
%         $ROA$ & Region of Attraction \\
%         \hline
%         $MF$ & Membership Function \\
%         \hline
%         $TS$ & Takagi-Sugeno \\
%         \hline
%     \end{tabular}
% \end{table}

% Table~\ref{tab:symbol_definitions} provides a list of acronyms used throughout this study and
Table~\ref{tab:notation} summarizes the mathematical notation used throughout this paper for quick reference.

\begin{table}[htbp]
\centering
\caption{Mathematical Notation}
\label{tab:notation}
\begin{tabular}{|p{1cm}|p{6cm}|}
\hline
\textbf{Symbol} & \textbf{Description} \\
\hline
$\mathbb{R}^n$ & $n$-dimensional Euclidean space \\
\hline
$\mathbb{R}^{m \times n}$ & Set of all $m \times n$ real matrices \\
\hline
$\mathbb{S}^n$ & Space of $n \times n$ symmetric matrices \\
\hline
$\mathbb{S}^n_+$ & Cone of $n \times n$ symmetric positive semidefinite matrices \\
\hline
$\mathbb{S}^n_{++}$ & Cone of $n \times n$ symmetric positive definite matrices \\
\hline
$X \succeq 0$ & Matrix $X$ is positive semidefinite \\
\hline
$X \succ 0$ & Matrix $X$ is positive definite \\
\hline
$X \succeq Y$ & Matrix $X - Y$ is positive semidefinite \\
\hline
$\text{tr}(X)$ & Trace of matrix $X$ (sum of diagonal elements) \\
\hline
$\text{det}(X)$ & Determinant of matrix $X$\\
\hline
$\text{diag}(X)$ & diagonal of matrix $X$\\
\hline
$X^{\top}$ & Transpose of matrix $X$ \\
\hline
$x^{\top}$ & Transpose of vector $x$ \\
\hline
$A \otimes B$ & Kronecker product of matrices $A$ and $B$ \\
\hline
$\innerProduct{A}{B}$ & Inner product of matrices, defined as $\text{tr}(A^{\top} B)$ \\
\hline
$\lambda_i(X)$ & The $i$-th eigenvalue of matrix $X$ \\
\hline
$v(P)$ & Optimal value of optimization problem $(P)$ \\
\hline
$\text{conv}(S)$ & Convex hull of set $S$ \\
\hline
$ \innerS{X} $ & $X + {X^{\top}}$ \\
\hline
\end{tabular}
\end{table}

% =======================================================================

\subsection{\textbf{Definitions for Mathematical Optimization Related Concepts}}
\subsubsection{\textbf{Linear Matrix Inequality}}
% =======================================================================
An \gls{lmi} is a constraint of the form:
\begin{equation}
F(x) = F_0 + \sum_{i=1}^m x_i F_i \succeq 0
\end{equation}
where $x = (x_1, x_2, \ldots, x_m)$ is the variable vector, the symmetric matrices $F_0, F_1, \ldots, F_m \in \mathbb{R}^{n \times n}$ are given, and the inequality symbol $\succeq 0$ denotes positive semidefiniteness. An \gls{lmi} defines a convex constraint on $x$. Multiple \glspl{lmi} can be expressed as a single \gls{lmi} using block-diagonal structure. \glspl{lmi} are central to semidefinite programming, where they serve as constraints in convex optimization problems that can be solved efficiently using interior-point methods \cite{boyd2004convex}.

% Their prevalence in systems and control engineering stems from their ability to represent Lyapunov stability conditions, performance criteria, and robust control problems in a computationally tractable framework.
% =======================================================================

\subsubsection{\textbf{Semi Definite Programming}}
% =======================================================================
\gls{sdp} is a powerful optimization framework that extends linear programming to the cone of \gls{psd} matrices. Formally, an \gls{sdp} can be expressed as:
\begin{equation}
\begin{aligned}
\text{minimize} \quad & \innerProduct{C}{X} \\
\text{subject to} \quad & \innerProduct{A_i}{X} \leq b_i, \quad i = 1, 2, \ldots, m \\
& X \succeq 0
\end{aligned}
\end{equation}
% \begin{align}
% \text{minimize} \quad & \innerProduct{C}{X} \\
% \text{subject to} \quad & \innerProduct{A_i}{X} \leq b_i, \quad i = 1, 2, \ldots, m \\
% & X \succeq 0
% \end{align}
\noindent where $X, C, A_1, \ldots, A_m$ are symmetric $n \times n$ matrices, $b_1, \ldots, b_m$ are scalars, and $X \succeq 0$ denotes that $X$ is \gls{psd}. The constraint $X \succeq 0$ means that all eigenvalues of $X$ are non-negative \cite{Tedrake10:LQR}.

\gls{sdp} has significant applications in control theory, combinatorial optimization, and machine learning. It generalizes several optimization problems, including linear and quadratic programming, while remaining solvable in polynomial time using interior-point methods. What makes \gls{sdp} particularly valuable is its ability to provide tight relaxations for many NP-hard problems, often yielding high-quality approximate solutions \cite{Recht10:Guaranteed}.

% =======================================================================

\subsubsection{\textbf{Relaxation in Mathematical Optimization}}
% =======================================================================
Relaxation is a fundamental technique in mathematical optimization where a complex problem is approximated by a simpler one that is more tractable to solve. Formally, given an optimization problem $(P)$:
\begin{align}
(P): \quad \min_{x \in \mathcal{F}} \quad f(x)
\end{align}
\noindent a relaxation $(R)$ of $(P)$ is another optimization problem:
\begin{align}
(R): \quad \min_{x \in \mathcal{F}'} \quad g(x)
\end{align}
\noindent where $\mathcal{F} \subseteq \mathcal{F}'$ and $g(x) \leq f(x)$ for all $x \in \mathcal{F}$.
The key property of a relaxation is that it provides a lower bound on the optimal value of the original problem. That is, if $v(P)$ and $v(R)$ denote the optimal values of problems $(P)$ and $(R)$ respectively, then $v(R) \leq v(P)$.
Relaxations are particularly useful for addressing difficult optimization problems, especially those that are non-convex or NP-hard. By replacing hard constraints with easier ones or approximating non-convex objective functions with convex ones, relaxations transform intractable problems into ones that can be efficiently solved. The quality of a relaxation is measured by the gap between the original problem's optimal value and the relaxation's optimal value—a smaller gap indicates a tighter relaxation \cite{Nedic10:Constrained}.

% =======================================================================

\subsubsection{\textbf{\gls{sdp} Relaxation}}
% =======================================================================
\gls{sdp} relaxation is a powerful technique that transforms difficult optimization problems, particularly non-convex quadratic problems, into tractable \glspl{sdp}. Given a \gls{qcqp} of the form:
\begin{equation}
\begin{aligned}
\min_{x \in \mathbb{R}^n} \quad & {x^{\top}} P_0 x + {{q_0}^{\top}} x + r_0 \\
\textrm{subject to} \quad & {{x}^{\top}} P_i x + {{q_i}^{\top}} x + r_i \leq 0, \\ & \forall i \in \{ 1, 2, \ldots, m \}
\\ & X \succeq 0
\end{aligned}
\end{equation}
% \begin{align}
% \min_{x \in \mathbb{R}^n} \quad & {x^{\top}} P_0 x + {{q_0}^{\top}} x + r_0 \\
% \textrm{subject to} \quad & {{x}^{\top}} P_i x + {{q_i}^{\top}} x + r_i \leq 0, \quad i = 1, 2, \ldots, m
% \end{align}

\noindent the \gls{sdp} relaxation proceeds by introducing a matrix variable $X = x{{x}^{\top}}$ and relaxing this non-convex equality constraint to $X \succeq x{{x}^{\top}}$. By the Schur complement condition, this is equivalent to requiring that:
\begin{align}
\begin{pmatrix} 
1 & {{x}^{\top}} \\ 
x & X
\end{pmatrix} \succeq 0
\end{align}
\noindent The resulting \gls{sdp} relaxation is:
\begin{equation}
\begin{aligned}
\min_{x \in \mathbb{R}^n, X \in \mathbb{S}^n} \quad & \text{tr}(P_0 X) + {{q_0}^{\top}} x + r_0 \\
\textrm{subject to} \quad & \text{tr}(P_i X) + {{q_i}^{\top}} x + r_i \leq 0, \\ & \forall i \in \{ 1, 2, \ldots, m \}
\\ & 
\begin{pmatrix} 
1 & {{x}^{\top}} \\ 
x & X
\end{pmatrix} \succeq 0
\end{aligned}
\end{equation}
% \begin{align}
% \min_{x \in \mathbb{R}^n, X \in \mathbb{S}^n} \quad & \text{tr}(P_0 X) + {{q_0}^{\top}} x + r_0 \\
% \textrm{subject to} \quad & \text{tr}(P_i X) + {{q_i}^{\top}} x + r_i \leq 0, \quad i = 1, 2, \ldots, m \\
% & \begin{pmatrix} 
% 1 & {{x}^{\top}} \\ 
% x & X
% \end{pmatrix} \succeq 0
% \end{align}
\noindent where $\mathbb{S}^n$ denotes the space of $n \times n$ symmetric matrices \cite{Kuindersma14:AnEfficiently}.

% \gls{sdp} relaxations have gained prominence due to their ability to provide tight approximations for many hard problems with theoretical guarantees. Notable applications include the MAX-CUT problem, where the Goemans-Williamson algorithm uses \gls{sdp} relaxation to achieve an approximation ratio of 0.878, and various problems in control theory, combinatorial optimization, and machine learning.
% =======================================================================

\subsection{\textbf{Definitions for Applications of the Methods}}

\subsubsection{\textbf{Dynamical System}}
A dynamical system is a mathematical model that describes the temporal evolution of a physical system's state according to a fixed rule. Formally, it consists of a state space $\mathcal{X}$ and a function $f: \mathcal{X} \times \mathbb{R} \to \mathcal{X}$ that specifies how the state $x(t) \in \mathcal{X}$ evolves over time. For continuous-time systems, this evolution is typically described by a differential equation:
\begin{equation}
\dot{x}(t) = f(x(t), t)
\end{equation}
\noindent where $\dot{x}(t)$ represents the derivative of the state with respect to time. Dynamical systems can be classified as linear or nonlinear, time-invariant or time-varying, and autonomous or non-autonomous. The analysis of these systems focuses on characterizing their stability properties, equilibrium points, limit cycles, and other qualitative behaviors \cite{Duan13:LMIs}.

\begin{remark}
    In this study, we primarily focus on polynomial-type nonlinear systems that are time-invariant. For analysis problems, we specifically examine autonomous systems that possess these properties.
\end{remark}

\subsubsection{\textbf{Stability Definitions}} % [GLOBAL-LOCAL] [Asymptotical-Exponential]
% =======================================================================
Consider a nonlinear dynamical system described by:
\begin{equation}
\label{eq:autonomous_NL_DYN_SYS}
\dot{x} = f(x),\quad x(0) = x_0
\end{equation}
where $x \in \mathbb{R}^n$ is the state vector, $f: \mathbb{R}^n \rightarrow \mathbb{R}^n$ is a locally Lipschitz function, and $f(0) = 0$ (i.e., the origin is an equilibrium point). We define various notions of stability as follows:

\begin{definition}[\textbf{Stability in the Sense of Lyapunov}]
The equilibrium point $x = 0$ is stable in the sense of Lyapunov if, for any $\epsilon > 0$, there exists a $\delta > 0$ such that:
\begin{equation}
\|x(0)\| < \delta \implies \|x(t)\| < \epsilon, \quad \forall t \geq 0
\end{equation}
\end{definition}

\begin{definition}[\textbf{Asymptotic Stability}]
The equilibrium point $x = 0$ is asymptotically stable if it is stable in the sense of Lyapunov and there exists a $\delta > 0$ such that:
\begin{equation}
\|x(0)\| < \delta \implies \lim_{t \rightarrow \infty} \|x(t)\| = 0
\end{equation}
\end{definition}

\begin{definition}[\textbf{Exponential Stability}]
The equilibrium point $x = 0$ is exponentially stable if there exist positive constants $\alpha$, $\beta$, and $\delta$ such that:
\begin{equation}
\|x(0)\| < \delta \implies \|x(t)\| \leq \alpha \|x(0)\| e^{-\beta t}, \quad \forall t \geq 0
\end{equation}
\end{definition}

\begin{definition}[\textbf{Global Stability}]
A stability property (Lyapunov, asymptotic, or exponential) is said to be global if it holds for any initial state $x(0) \in \mathbb{R}^n$ (i.e., $\delta \rightarrow \infty$).
\end{definition}

\begin{definition}[\textbf{Local Stability}]
A stability property is said to be local if it holds only for initial states within some bounded neighborhood of the equilibrium (i.e., $\|x(0)\| < \delta$ for some finite $\delta > 0$).
\end{definition}
% =======================================================================
\subsubsection{\textbf{Analysis Problem in LTI systems}} % ANALYSIS
% \textcolor{red}{1. asymp stab} \\
% \textcolor{red}{2. robust stab} \\
% \textcolor{red}{3. quad stab} \\
% =======================================================================
The analysis problem in LTI systems involves determining system properties such as stability, performance, and robustness given a complete system description but we mainly refer to stability in this report. Consider a continuous-time LTI system:
\begin{equation}
\dot{x}(t) = Ax(t)
\end{equation}
where $x(t) \in \mathbb{R}^n$ is the state vector and $A \in \mathbb{R}^{n \times n}$ is the system matrix.

The stability analysis problem can be formulated as verifying whether all eigenvalues of $A$ have negative real parts. Using Lyapunov theory, this is equivalent to determining whether there exists a positive definite matrix $P$ such that:
\begin{equation}
% {{A}^{\top}}P + PA \prec 0
-\innerS{PA} \in \mathbb{S}_{++}
\end{equation}
This verification can be expressed as a convex feasibility problem:
\begin{equation}
\begin{aligned}
\text{find} \quad & P \\
\text{subject to} \quad & P  \in \mathbb{S}_{++} \\
& -\innerS{P A}  \in \mathbb{S}_{++}
\end{aligned}    
\end{equation}
This formulation presents a significant advantage: the problem can be transformed into a LMI constraint, making it convex and solvable in polynomial time. The framework extends naturally to other analysis problems including $\mathcal{H}_2$ and $\mathcal{H}_{\infty}$ performance assessment, input-to-state stability, and robust stability against model uncertainties, with each case expressible as an SDP with appropriate LMI constraints.
% =======================================================================
\subsubsection{\textbf{Synthesis Problem in LTI systems}} % SYNTHESIS
% =======================================================================
The synthesis problem in LTI systems involves designing controllers that achieve desired closed-loop properties such as stability, performance, and robustness but we mainly
refer to the stabilizing controller design problem in this report.
Consider a controllable LTI system:
\begin{equation}
\dot{x}(t) = Ax(t) + Bu(t)
\end{equation}
where $x(t) \in \mathbb{R}^n$ is the state vector, $u(t) \in \mathbb{R}^m$ is the control input, $A \in \mathbb{R}^{n \times n}$ is the system matrix, and $B \in \mathbb{R}^{n \times m}$ is the input matrix. 

For a state feedback control law $u(t) = Kx(t)$, the closed-loop system becomes:
\begin{equation}
\dot{x}(t) = (A + BK)x(t)
\end{equation}

A typical synthesis problem aims to find a stabilizing feedback gain $K$ that minimizes a performance criterion, such as an $\mathcal{H}_2$ or $\mathcal{H}_{\infty}$ norm. This can be formulated as:
\begin{equation}
\begin{aligned}
\text{minimize} \quad & \text{Performance Measure}(A + BK) \\
\text{subject to} \quad & (A + BK) \text{ is stable}
\end{aligned}
\end{equation}
If there is no performance criterion to be optimized, then the resulting problem is known as stabilizing controller design problem and it can be given by the following feasibility problem:
\begin{equation}
\begin{aligned}
\text{find} \quad & (P,K) \\
\text{subject to} \quad P & \in \mathbb{S}_{++} \\
-\innerS{P A_{c}} & \in \mathbb{S}_{++} \\
A_{c} & = A+BK
\end{aligned}
\end{equation}
While this problem appears non-convex due to the product term $PBK$, it can be transformed into a convex problem through a change of variables. By defining $X = P^{-1}$ and $M = KX$ where 
$P \in \mathbb{S}_{++}$,
% $P \succ 0$
the stability constraint becomes:
\begin{equation}
\innerS{AX + BM} \in \mathbb{S}_{++}
\end{equation}
which results in the following problem:
\begin{equation}
\begin{aligned}
\text{find} \quad & (X,M) \\
\text{subject to} \quad X &  \in \mathbb{S}_{++} \\
-\innerS{AX + BM} &  \in \mathbb{S}_{++}
\end{aligned}
\end{equation}
where the $(P,K)$ terms are given by $(X^{-1},M{X^{-1}})$.
\begin{remark}
    It should be noted that since $P \in \mathbb{S}_{++}^{n}$, it is an invertible matrix and $K$ can be computed after computing the $(X,M)$ pair.
\end{remark}
% =======================================================================
\subsubsection{\textbf{Lyapunov Function and Stability of Nonlinear Dynamical Systems}}
% LYAPUNOV THEOREMS
\begin{theorem}
For \eqref{eq:autonomous_NL_DYN_SYS}, where the origin is the equilibrium point,
If $\exists V(x): \mathcal{D} \rightarrow \mathbb{R}$ \textit{continuously differentiable} such that
\begin{align*}
    V(0) &= 0, \\
    \dot{V}(0) &= 0, \\
    V(x) &> 0, \quad \forall x \in \mathcal{D} \setminus \{0\}, \\
    \dot{V}(x) &< 0, \quad \forall x \in \mathcal{D} \setminus \{0\},
\end{align*}
then $\mathbf{x} = 0$ is \textit{asymptotically stable}; if, in addition, $\mathcal{D} = \mathbb{R}^n$ and $V(x)$ is \textit{radially unbounded}, then $\mathbf{x} = 0$ is \textit{globally asymptotically stable}.
If $V(x)$ satisfies the constraints in $\Omega \subseteq \mathcal{D}$, then $\mathbf{x} = 0$ is \textit{locally asymptotically stable} and $\Omega$ is called \textit{ROA}.
\end{theorem}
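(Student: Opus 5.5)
The plan is the classical Lyapunov direct-method argument: use sublevel sets of $V$ as trapping regions, and use the strict decrease of $V$ along trajectories to force convergence. Throughout, $\mathcal{D}$ is an open neighborhood of the origin. Existence and uniqueness of solutions of \eqref{eq:autonomous_NL_DYN_SYS} follow from local Lipschitz continuity of $f$. Along any solution, $\frac{d}{dt}V(x(t)) = \nabla V(x(t))^{\top} f(x(t)) = \dot{V}(x(t))$, so $V$ is nonincreasing as long as the trajectory stays in $\mathcal{D}$.

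\emph{Stability.} Given $\epsilon>0$, pick $r\in(0,\epsilon]$ with the closed ball $B_r=\{\|x\|\le r\}\subset\mathcal{D}$. Set $\alpha=\min_{\|x\|=r}V(x)$; this is positive by continuity, compactness of the sphere, and positivity of $V$ away from $0$. Take $\beta\in(0,\alpha)$ and define $\Omega_\beta=\{x\in B_r: V(x)\le\beta\}$. This set lies in the interior of $B_r$, because $V\ge\alpha>\beta$ on the sphere. It is positively invariant: a trajectory starting in $\Omega_\beta$ can only leave $B_r$ by crossing the sphere, which would require $V$ to increase above $\beta$, contradicting monotonicity. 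Since $\Omega_\beta$ is also compact, solutions starting there exist for all $t\ge0$. Finally, $V(0)=0$ and $V$ is continuous, so there is $\delta>0$ with $\|x\|<\delta\Rightarrow V(x)<\beta$. Then $\|x(0)\|<\delta$ implies $x(t)\in\Omega_\beta\subset B_r$, hence $\|x(t)\|<\epsilon$ for all $t\ge 0$.

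\emph{Attractivity.} This is the step I expect to carry the real content. For $x(0)\in\Omega_\beta$, the function $t\mapsto V(x(t))$ is nonincreasing and bounded below by $0$, so it converges to some $c\ge 0$. Suppose $c>0$. By continuity, choose $d>0$ with $B_d\subset\{V<c\}$; the trajectory then stays in the compact annulus $K=\{d\le\|x\|\le r\}\cap\Omega_\beta$. On $K$, $\dot V$ is continuous and strictly negative, so $-\gamma:=\max_K\dot V<0$. Integrating gives $V(x(t))\le V(x(0))-\gamma t$, which becomes negative in finite time, a contradiction. Hence $c=0$. To conclude $x(t)\to0$, note that if $\|x(t_k)\|\ge d'>0$ along a sequence $t_k\to\infty$, then $V(x(t_k))\ge\min_{\{d'\le\|x\|\le r\}}V>0$, contradicting $c=0$. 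So the origin is asymptotically stable, and $\Omega_\beta$ is a certified ROA estimate.

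\emph{Global case and ROA statement.} If $\mathcal{D}=\mathbb{R}^n$ and $V$ is radially unbounded, every sublevel set $\{V\le c\}$ is bounded, and it is closed, hence compact. Given any $x(0)$, the same argument applies with $\Omega=\{V\le V(x(0))\}$ in place of $\Omega_\beta$: positive invariance and compactness give existence and boundedness of the trajectory, and the annulus argument gives convergence. This yields global asymptotic stability. For the last sentence of the theorem, I would interpret $\Omega$ as a compact, positively invariant set contained in $\mathcal{D}$, typically a sublevel set $\{V\le c\}$ bounded inside $\mathcal{D}$, on which the conditions hold. The argument above then shows every trajectory starting in $\Omega$ converges to the origin. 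Without compactness and invariance the claim can fail, so this interpretation is needed and should be stated explicitly.
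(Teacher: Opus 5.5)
The paper gives no proof of this theorem. It is quoted as standard background (Lyapunov's direct method) and used only implicitly in the proof of Theorem~\ref{thm:TS_autonomous}, so there is no argument of the paper's to compare against.

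Your proposal is the standard textbook proof, and it is correct. It uses a sublevel-set trapping region inside $B_r$ for stability, a uniform negative bound on $\dot V$ over a compact annulus to force $V(x(t))\to 0$, and compactness of sublevel sets under radial unboundedness for the global case.

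Your explicit reinterpretation of the final clause is genuinely needed: $\Omega$ must be a compact, positively invariant set such as a bounded sublevel set $\{V\le c\}\subset\mathcal{D}$. As written, the clause fails for an arbitrary $\Omega\subseteq\mathcal{D}$. Your reading is exactly how the paper actually uses the result when it requires the full ellipsoid $\{x: x^{\top}Px\le k\}$ to lie in $\mathcal{D}$ in \eqref{eq:k_for_RoA_ALGO}.
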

% =======================================================================
% =======================================================================
\subsubsection{\textbf{Convex Optimization Based Nonlinear Dynamical System Analysis}}
% =======================================================================
\begin{definition}
A \textit{convex model for autonomous systems} is a set of first-order \glspl{ode} whose right-hand side can be expressed as a convex sum of vector fields (models), namely
\begin{equation}
\label{eq:convex_model_autonomous}
    \dot{\mathbf{x}}(t) = \sum_{i=1}^{r} h_i(\mathbf{z}) f_i(\mathbf{x})
\end{equation}
where $r \in \mathbb{N}$ represents the number of models in the convex sum, $\mathbf{z} \in \mathbb{R}^p$ may depend on the state $\mathbf{x}$, time $t$, disturbances or exogenous parameters, and the nonlinear functions $h_i(\mathbf{z})$, $i \in \{1, 2, \ldots, r\}$ hold the \textit{convex sum property} in a compact set $\mathcal{C} \subseteq \mathbb{R}^p$, i.e.:
\begin{equation}
\label{eq:convex_sum_property}
    \sum_{i=1}^{r} h_i(\mathbf{z}) = 1, \quad 0 \leq h_i(\mathbf{z}) \leq 1, \quad \forall \mathbf{z} \in \mathcal{C}.
\end{equation}
and the special case is given by
\begin{equation}
\label{eq:convex_sum_Ai_autonomous}
    \dot{\mathbf{x}}(t) = \sum_{i=1}^{r} h_i(\mathbf{z}) A_i \mathbf{x}(t)
\end{equation}
\end{definition}
%
% =======================================================================
\begin{definition}
A \textit{convex model for nonautonomous systems} is a set of first-order \glspl{ode} whose right-hand side can be expressed as a convex sum of vector fields (models), namely
\begin{equation}
\label{eq:convex_model_w_input}
    \dot{\mathbf{x}}(t) = \sum_{i=1}^{r} h_i(\mathbf{z}) f_i(\mathbf{x}, \mathbf{u})
\end{equation}
where $r \in \mathbb{N}$ represents the number of models in the convex sum, $\mathbf{z} \in \mathbb{R}^p$ may depend on the state $\mathbf{x}$, input $\mathbf{u}$, time $t$, disturbances or exogenous parameters, and the nonlinear functions $h_i(\mathbf{z})$, $i \in \{1, 2, \ldots, r\}$ hold the \textit{convex sum property} in a compact set $\mathcal{C} \subseteq \mathbb{R}^p$, i.e.:
\begin{equation}
\label{eq:convex_sum_property}
    \sum_{i=1}^{r} h_i(\mathbf{z}) = 1, \quad 0 \leq h_i(\mathbf{z}) \leq 1, \quad \forall \mathbf{z} \in \mathcal{C}.
\end{equation}
\end{definition}
\begin{definition}
The vector $\mathbf{z}$ in a convex model~\eqref{eq:convex_model_w_input} is known as the \textit{premise} or \textit{scheduling vector}; it is assumed to be bounded and continuously differentiable in a compact set $\mathcal{C} \subseteq \mathbb{R}^p$ of the scheduling/premise space.
\end{definition}

\begin{definition}
The nonlinear functions $h_i(\mathbf{z}),\ i \in \{1, 2, \ldots, r\}$ in a convex model~\eqref{eq:convex_model_w_input} are known as \textit{\glspl{mf}}; they hold the convex sum property in a compact set $\mathcal{C} \subseteq \mathbb{R}^p$ of the scheduling space.
\end{definition}

\begin{definition}
If $f_i(\mathbf{x}, \mathbf{u})$ are linear, i.e., $f_i(\mathbf{x}, \mathbf{u}) = A_i \mathbf{x} + B_i \mathbf{u}$,~\eqref{eq:convex_model_w_input} is called a \textit{TS} model~\cite{Bernal2022:BOOK}; if, in addition, $\mathbf{z}$ does not include the state $\mathbf{x}$ it is called a \textit{linear polytopic model}~\cite{Takagi1985:FuzzyIdentification}; if the \gls{ts} model comes from fuzzy modeling techniques it is also referred to as \textit{\gls{ts} fuzzy model}.
\end{definition}
% =======================================================================

\begin{definition}
A \textit{\gls{ts} model} is a set of first-order \glspl{ode} whose right-hand side can be expressed as a convex sum of linear models (known as \textit{consequents} or \textit{vertex} models) via nonlinear functions (known as membership functions, \glspl{mf}) that hold the convex sum property \cite{Tanaka2001:FuzzControlSystems}, i.e.
\begin{equation}
\begin{aligned}
    \dot{\mathbf{x}}(t) &= \sum_{i=1}^{r} h_i(\mathbf{z})\big(A_i \mathbf{x}(t) + B_i \mathbf{u}(t)\big), \\ % ", \notag" to not give a tag
    \mathbf{y}(t) &= \sum_{i=1}^{r} h_i(\mathbf{z})\big(C_i \mathbf{x}(t) + D_i \mathbf{u}(t)\big). \label{eq:ts_model}
\end{aligned}
\end{equation}
\end{definition}
where $r \in \mathbb{N}$ is the number of vertex models, $\mathbf{x} \in \mathbb{R}^n$ is the state vector, $\mathbf{u} \in \mathbb{R}^m$ is the input vector, $\mathbf{y} \in \mathbb{R}^q$ is the output vector, $\mathbf{z} \in \mathbb{R}^p$ is the premise vector (which may depend on the state, time, or exogenous parameters and is assumed to be bounded and smooth in a compact set $\mathcal{C} \subseteq \mathbb{R}^p$), $A_i$, $B_i$, $C_i$, $D_i$, $i \in \{1, 2, \ldots, r\}$ are matrices of proper dimensions, and the \glspl{mf} $h_i(\mathbf{z})$, $i \in \{1, 2, \ldots, r\}$ hold the \textit{convex sum property} in $\mathcal{C}$, given by \eqref{eq:convex_sum_property}.
% =======================================================================
% THEOREM - PROOF FOR ANALYSIS PROBLEM
\begin{theorem}
\label{thm:TS_autonomous}
The origin $\mathbf{x} = 0$ of the unforced \gls{ts} model~\eqref{eq:convex_sum_Ai_autonomous} with $\delta \mathbf{x}(t) = \dot{\mathbf{x}}(t)$, $t \in \mathbb{R}$, is asymptotically stable if there exists a matrix $P = {{P}^{\top}} > 0$ such that the following \glspl{lmi} hold:
\begin{equation}
    \label{eq:Vdot_lmi_condition_autonomous}
PA_i + {{A_i}^{\top}} P < 0, \quad i \in \{1, 2, \dots, r\}. 
\end{equation}

\textit{In this case, any trajectory $\mathbf{x}(t)$ starting within the outermost Lyapunov level $V(\mathbf{x}) = {{\mathbf{x}}^{\top}} P \mathbf{x} \leq k$, $k > 0$, in a compact set of the state space $\mathcal{X}$ guaranteeing $\mathbf{z} \in \mathcal{C} \subseteq \mathbb{R}^p$ goes asymptotically to $\mathbf{x} = 0$.}
\end{theorem}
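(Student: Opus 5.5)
We take the quadratic candidate $V(\mathbf{x}) = \mathbf{x}^{\top} P \mathbf{x}$, with $P = P^{\top} > 0$ supplied by the hypothesis, and verify the conditions of the Lyapunov theorem stated earlier on a suitable sublevel set. The plan has three steps: positivity of $V$, negativity of $\dot V$ on the region where the convex sum property holds, and invariance of the chosen level set so that the premise never leaves $\mathcal{C}$.

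\textbf{Positivity and the derivative.} Positive definiteness of $P$ gives $V(0)=0$ and $V(\mathbf{x}) \geq \lambda_{\min}(P)\|\mathbf{x}\|^2 > 0$ for $\mathbf{x}\neq 0$. Along trajectories of \eqref{eq:convex_sum_Ai_autonomous},
\begin{equation*}
\dot V(\mathbf{x}) = \sum_{i=1}^{r} h_i(\mathbf{z})\, \mathbf{x}^{\top}\big(PA_i + A_i^{\top}P\big)\mathbf{x}.
\end{equation*}
Whenever $\mathbf{z}\in\mathcal{C}$, the weights satisfy \eqref{eq:convex_sum_property}: they are nonnegative, sum to one, so at least one is positive. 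Let $-\mu<0$ be the largest eigenvalue over all the matrices in \eqref{eq:Vdot_lmi_condition_autonomous}; this uses finiteness of $r$. Then
\begin{equation*}
\dot V \leq -\mu\|\mathbf{x}\|^2 .
\end{equation*}
This is negative definite, and in fact gives exponential decay $V(t)\le V(0)e^{-\mu t/\lambda_{\max}(P)}$ as long as the bound holds.

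\textbf{Invariance of the level set (main obstacle).} The bound above holds only where $\mathbf{z}(\mathbf{x})\in\mathcal{C}$, i.e.\ on a compact set $\mathcal{X}$ containing the origin. I would choose $k>0$ such that $\Omega_k=\{\mathbf{x}: \mathbf{x}^{\top}P\mathbf{x}\le k\}\subseteq\mathcal{X}$. Such a $k$ exists because $\mathcal{X}$ contains a neighbourhood of the origin, and the ellipsoids shrink to the origin as $k\to 0$. ``Outermost'' means taking the supremum of such $k$; since $\mathcal{X}$ is closed, the limit set is still contained in it.

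Then argue by contradiction. Suppose a trajectory starts in $\Omega_k$ and leaves it. Consider the first exit time $t^*$, i.e.\ the infimum of the times at which $V>k$. On $[0,t^*]$ the state lies in $\Omega_k\subseteq\mathcal{X}$, so $\dot V\le 0$ there, hence $V(\mathbf{x}(t))\le V(\mathbf{x}(0))\le k$. By continuity, and since $\dot V<0$ at the boundary point (where $\mathbf{x}\ne0$), $V$ is strictly decreasing at $t^*$ and cannot exceed $k$ immediately afterward. This contradicts the definition of $t^*$.

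So $\Omega_k$ is positively invariant, and $\mathbf{z}$ remains in $\mathcal{C}$ for all $t\ge0$. Invariance also yields boundedness of the solution, hence existence for all forward time, given local Lipschitzness of the vector field.

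\textbf{Conclusion.} Having $\dot V\le-\mu\|\mathbf{x}\|^2$ on the invariant compact set $\Omega_k$, invoke the Lyapunov theorem with $\mathcal{D}=\Omega_k$, or directly use the exponential bound on $V$. This gives $\mathbf{x}(t)\to0$ for every initial condition in $\Omega_k$, so the origin is asymptotically stable and $\Omega_k$ is an estimate of the ROA, as the statement claims.
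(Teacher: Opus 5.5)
Your proof is correct and follows the same route as the paper: the quadratic $V(\mathbf{x})=\mathbf{x}^{\top}P\mathbf{x}$, the convex-sum expansion of $\dot V$ using $h_i\ge 0$ and $\sum_i h_i=1$, and the conclusion on the largest sublevel set contained in $\mathcal{X}$. Your uniform rate $\mu$ and your first-exit-time argument for positive invariance of $\Omega_k$ spell out a step the paper's proof only asserts, namely that trajectories starting in that level set never leave the region where $\mathbf{z}\in\mathcal{C}$.
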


\begin{proof}
Since $P = {{P}^{\top}} > 0$, $V(\mathbf{x}) = \mathbf{x}^{\top} P \mathbf{x}$ is a valid Lyapunov function candidate; its time derivative along the trajectories of the unforced model~\eqref{eq:convex_sum_Ai_autonomous} is

\begin{align}
\dot{V}(t) &= \dot{\mathbf{x}}^{\top} P \mathbf{x} + \mathbf{x}^{\top} P \dot{\mathbf{x}} \\
           &= \left( \sum_{i=1}^{r} h_i(\mathbf{z}) A_i \mathbf{x} \right)^{\top} P \mathbf{x} + {{\mathbf{x}}^{\top}} P \left( \sum_{i=1}^{r} h_i(\mathbf{z}) A_i \mathbf{x} \right) \notag \\
&= \sum_{i=1}^{r} h_i(\mathbf{z}) \mathbf{x}^{\top} \left( A_i^{\top} P + P A_i \right) \mathbf{x}.
\end{align}
where the last equality made use of the property $\sum_{i=1}^{r} h_i(\mathbf{z}) = 1$.
Since $h_i(\mathbf{z}) \geq 0$, $i \in \{1, 2, \ldots, r\}$, $\forall \mathbf{z} \in \mathcal{C}$,
not all simultaneously 0,
it is clear that \glspl{lmi}~\eqref{eq:Vdot_lmi_condition_autonomous} imply $\forall \mathbf{x} \neq 0$ then $\dot{V} < 0$, which means $V(\mathbf{x})$ is a valid Lyapunov function establishing asymptotic stability of $\mathbf{x} = 0$.
There exists a compact set $\mathcal{X}$ of the state space with $0 \in \mathcal{X}$, guaranteeing $\mathbf{z} \in \mathcal{C}$, which means that any trajectory $\mathbf{x}(t)$ in the largest Lyapunov level within $\mathcal{X}$ goes asymptotically to 0.

\end{proof}

% REMARK:

% =======================================================================
% THEOREM - PROOF FOR SYNTHESIS PROBLEM
\begin{remark}
    If the convex sum property holds everywhere, i.e.\ if $\mathbf{z} \in \mathcal{C}$ irrespective of $\mathbf{x}$, the results above are global, i.e.\ $\mathcal{X} = \mathbb{R}^n$.
\end{remark}

% ---- END SECTION-2 ----

% ---- BEGIN SECTION-3 ----
\section{\gls{roa} Computation Using \gls{ts} Method}  \label{sec:ROA_TS}
In this section, the \gls{roa} estimation using \gls{ts} method is described by means of an numerical example.

Consider the dynamical system described by the following equations
\begin{equation}
\begin{aligned}
    \dot{x}_1 &= \ -(x_1)^2 -2(x_2)-2(x_1) \\
    \dot{x}_2 &= \ (x_2)^3-x_2
\end{aligned}
\end{equation}
where an estimate of the \gls{roa} needs to be computed by using \gls{ts}-convex modeling.

Considering the region given by
\begin{equation}
    \mathcal{D} = \left\{ x \in \mathbb{R}^{2} : x_1 \in [-1,+1] , x_2 \in [-0.5,+0.5] \right\}
\end{equation}
the dynamics can be rewritten as
\begin{equation}
    \matrixTwoOne{\dot{x}_1}{\dot{x}_2} = \matrixTwoTwo{-x_1-2}{-2}{0}{(x_2)^2-1)}
\matrixTwoOne{x_1}{x_2}
\end{equation}
and this can be written as
\begin{equation}
\begin{aligned}
\dot{x}(t) &= \sum_{i_1=0}^{1} \sum_{i_2=0}^{1}  
w^{1}_{i_1}(z) w^{2}_{i_2}(z)
\left(
\matrixTwoTwo{-z_{1}^{i_1}-2}{-2}{0}{z_{2}^{i_2}-1)}
% \matrixTwoTwo{-x_1-2}{-2}{0}{(x_2)^2-1)}
\matrixTwoOne{x_1}{x_2}
\right) \\
&= \sum_{\mathbf{i} \in \mathbb{B}^2} \mathbf{w}_{\mathbf{i}}(z) 
\left( A_{\mathbf{i}} x(t)\right)
= A_{\mathbf{w}} x(t),
\end{aligned}
\end{equation}
where the terms are defined as
\begin{equation}
    z_{1}=x_1, \quad z_{2}=(x_2)^2.
\end{equation}
and are defined on the set given by
\begin{equation}
    \mathcal{C}=
    \left\{
    z \in \mathbb{R}^2 : \exists x\in \mathcal{D} \, s.t. \, z_1=x_1,z_2=(x_2)^2   
    \right\}.
\end{equation}
Explicitly, the terms are written as
\begin{equation}
        z_{1} = \sum_{i_1=0}^{1} w^{1}_{i_1}(x) z_{1}^{i_1},\quad z_{2} = \sum_{i_2=0}^{1} w^{2}_{i_2}(x) z_{2}^{i_2}.
\end{equation}
%
% \begin{equation}
% \begin{aligned}
%         z_{1} &= \sum_{i_1=0}^{1} w^{1}_{i_1}(x) z_{1}^{i_1} \\
%         z_{2} &= \sum_{i_2=0}^{1} w^{2}_{i_2}(x) z_{2}^{i_2}
% \end{aligned}
% \end{equation}
where the boundary-terms are computed by
\begin{equation}
    \begin{aligned}
    &\min_{x_1(t),\, x_2(t)} z_1(t) = z_{1}^{i_1=0}, \quad && \max_{x_1(t),\, x_2(t)} z_1(t) = z_{1}^{i_1=1}, \\
    &\min_{x_1(t),\, x_2(t)} z_2(t) = z_{2}^{i_2=0}, \quad && \max_{x_1(t),\, x_2(t)} z_2(t) = z_{2}^{i_2=1}.
    \end{aligned}    
\end{equation}
and the corresponding \glspl{mf} are given by
\begin{equation}
    \begin{aligned}
        w_{i_1=1}^{1}(x)=\frac{z_{1}^{1} - z_1}{z_{1}^{1}-z_{1}^{0}} &, \quad w_{i_1=0}^{1}(x)=1-w_{i_1=1}^{1}(x). \\
        w_{i_2=1}^{2}(x)=\frac{z_{2}^{1} - z_2}{z_{2}^{1}-z_{2}^{0}} &, \quad w_{i_2=0}^{2}(x)=1-w_{i_2=1}^{2}(x).    
    \end{aligned}
\end{equation}
Using Theorem~\ref{thm:TS_autonomous}, the problem can be stated by the following feasibility problem:
\begin{equation}
    \begin{aligned}
    \text{find} \quad & P \\
    \text{subject to} \quad & P  \in \mathbb{S}_{++} \\
    & -\innerS{P A_i}  \in \mathbb{S}_{++},\forall i
    \end{aligned}
\end{equation}
The matrix is computed to be 
\begin{equation}
    P=\matrixTwoTwo{0.2017}{-0.1326}{-0.1326}{0.7656}
\end{equation}
and \gls{roa} is parameterized by a term, $k$ which is computed by
\begin{equation}
    \begin{aligned}
    \label{eq:k_for_RoA_ALGO}
    \text{max} \quad & k \\
    \text{subject to} \quad & \Omega=\left\{ x \in \mathcal{D}: x^{\top}Px \leq k \right\} \\
                     \Omega &\subseteq \mathcal{D}
    \end{aligned}
\end{equation}
and $k$ is computed to be $0.17$
using the method described in \cite{Robles17:Subspace_Based_TS}
and the corresponding \gls{roa} is illustrated in Fig-\ref{fig:RoA_problem_1}.
% PUT THE FIGURE
% CVX_OPT_REPORT_PROBLEM_1_RoA.eps
\begin{figure}[h!]
	\centering
	\includegraphics[width=\linewidth]{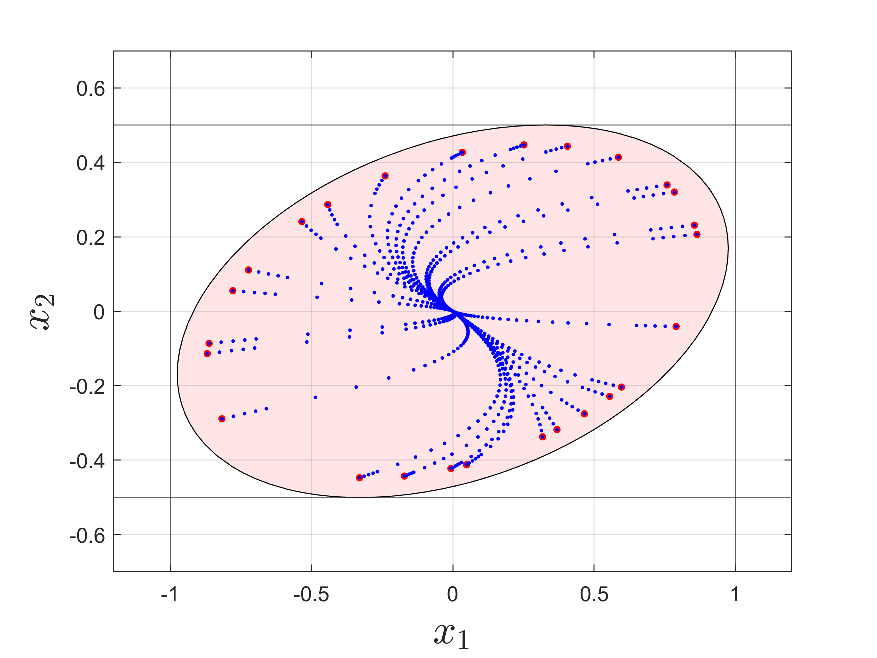}
	\caption{\gls{roa} for the system computed with the \gls{ts} method}
	\label{fig:RoA_problem_1}
\end{figure}      
% ---- END SECTION-3 ----

% ---- BEGIN SECTION-4 ----
\section{\gls{roa} Computation Using Combined State Transformation and \gls{ts} Method}  \label{sec:ROA_CoC}
For the stability analysis method, \gls{ts}-method is useful in that it offers a straightforward algorithm to construct a Lyapunov function to guarantee a local asymptotical stability.
Unfortunately, the results depends on the \glspl{mf} and, as stated previously, that process is not straightforward, since there is not a unique set of \glspl{mf} for a given system.
However, instead of finding the 'best' set of \glspl{mf} to maximize the \gls{roa} , 
we can have a relatively simple 'method for computing the set of  \glspl{mf}' for a given dynamical system and we can get one \gls{roa} ,
and then by implementing a change of variables $\bar{x}=T(x)$ resulting in $\dot{\bar{x}}=\bar{f}(\bar{x})$,
and applying the same \gls{ts}-method to this system results in another \gls{roa} , we can compute the boundary of the ellipsoidal regions (for example by representing them as a set of points that are defined in $\bar{x}$-space)
and transforming them into the $x$-space would give us another region that is guaranteed to be locally asymptotically stable.
Additionally, the aforementioned transformation for the change of variables can be as simple as a linear transformation where $T(x)=T x$, where $T\in\mathbb{R}^{n \times n}$ is any invertible matrix.
For the problem stated in Section \ref{sec:ROA_TS}, this 'change of variables idea' can be implemented.
By choosing a simple transformation matrix
\begin{equation}
    T=\matrixTwoTwo{1}{2}{0}{1}
\end{equation}
which results in the following system
\begin{equation}
\begin{aligned}
    \dot{\bar{x}}_1 &= \ -(\bar{x}_1^2) +4(\bar{x}_1)(\bar{x}_2)-2(\bar{x}_1)+2(\bar{x}_2^3)-4(\bar{x}_2^2)  \\
    \dot{\bar{x}}_2 &= \ (\bar{x}_2^3)-\bar{x}_2 % \bar{x}
\end{aligned}
\end{equation}
and by choosing the following set of \glspl{mf} :
\begin{equation}
    z_1 = \bar{x}_1, \quad z_2 = \bar{x}_2, \quad  z_3 = \bar{x}_2^2.
\end{equation}

% \begin{equation}
% \begin{aligned}
%     z_1 &= \bar{x}_1  \\
%     z_2 &= \bar{x}_2  \\
%     z_3 &= \bar{x}_2^2
% \end{aligned}
% \end{equation}
 By choosing the following region of interest
 \begin{equation}
     \bar{\mathcal{D}} = \left\{  x :  \bar{x}_1 \in [-0.55,+0.55] ,  \bar{x}_2 \in [-0.55,+0.55] \right\}
 \end{equation}
and the \gls{roa} for the $\bar{x}$-domain and the region that is computed by mapping that region back to the original $x$-domain is illustrated in Fig-\ref{fig:RoA_problem_3}.
% PUT THE FIGURE
% CVX_OPT_REPORT_PROBLEM_2_RoA.eps

\begin{figure}[h!]
\vspace{-1.5em}
	\centering
	\includegraphics[width=\linewidth]{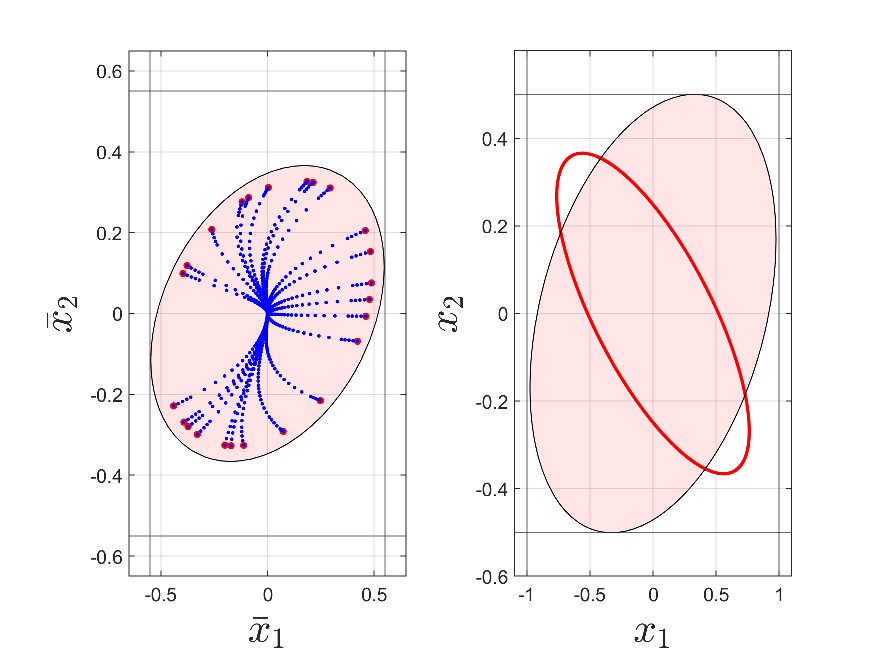}
	\caption{(a) \gls{roa} computed for the system whose state variable is $\bar{x}\in\bar{\mathcal{D}}$
    (b) \gls{roa} computed for the system whose state variable is $x\in\mathcal{D}$ by using the \gls{ts}-method illustrated by red-shaded region and the boundary of the \gls{roa} that is computed in $\bar{\mathcal{D}}$ domain and mapped to $\mathcal{D}$}
	\label{fig:RoA_problem_3}
\end{figure}	

% \begin{figure}[h!]
% 	\centering
% 	\includegraphics[width=\linewidth]{figs/TS_PWC_Fig_1.eps}
% 	\caption{(a) \gls{roa} computed for the system whose state variable is $\bar{x}\in\bar{\mathcal{D}}$
%     (b) \gls{roa} computed for the system whose state variable is $x\in\mathcal{D}$ by using the \gls{ts}-method illustrated by red-shaded region and the boundary of the \gls{roa} that is computed in $\bar{\mathcal{D}}$ domain and mapped to $\mathcal{D}$}
% 	\label{fig:RoA_problem_X}
% \end{figure}

% \begin{figure}[h!]
% 	\centering
% 	\includegraphics[width=\linewidth]{figs/TS_PWC_Fig_2.eps}
% 	\caption{(a) \gls{roa} computed for the system whose state variable is $\bar{x}\in\bar{\mathcal{D}}$
%     (b) \gls{roa} computed for the system whose state variable is $x\in\mathcal{D}$ by using the \gls{ts}-method illustrated by red-shaded region and the boundary of the \gls{roa} that is computed in $\bar{\mathcal{D}}$ domain and mapped to $\mathcal{D}$}
% 	\label{fig:RoA_problem_X}
% \end{figure}

It can be seen that the red-ellipse which represents the boundaries of the ellipsoidal region that is computed using the presented 'change of variables method' has some areas outside of the original \gls{roa} that is computed in the Section \ref{sec:ROA_TS} and that suggests that iteratively using a method (by automatizing the set of  \glspl{mf}  for a given nonlinear dynamical system) results in a larger \gls{roa}.      
% ---- END SECTION-4 ----

% ---- BEGIN SECTION-5 ----
\section{Conclusions}  \label{sec:CONCLUSIONS}
In this study, we explored the use of \gls{sdp} relaxations and
\glspl{lmi} as powerful tools for analyzing and designing control
systems. As a key application, we employed the \gls{ts} modeling
framework to study \gls{roa} estimation problem. The
\gls{ts} approach enabled us to leverage convex optimization techniques to handle nonlinear dynamics effectively. Furthermore,
we proposed an extension to the standard methods, aimed
at enhancing their applicability and reducing conservatism in
practical scenarios. These contributions collectively demonstrate the potential of combining \gls{ts} modeling with \gls{sdp}-based
tools to address challenging problems in nonlinear dynamical systems.      
% ---- END SECTION-5 ----

% \section{\textcolor{red}{TO DO LIST}}
% \begin{table}[htbp]     \label{tab:symbol_definitions}
%     \centering
%     \caption{\textcolor{red}{TO DO LIST}}
%     \begin{tabular}{|p{6cm}|p{1cm}|}
%         \hline
%         \textbf{TASK} & \textbf{TIME[h]} \\
%         \hline
%         \textcolor{red}{check the PRELIMINARIES section (there must be inconsistencies)} & 1h \\
%         \hline
%         \textcolor{red}{check the ACRONYMS (there must be repeating ones)} & 1h \\
%         \hline
%         REFERENCES & 1h \\
%         \hline
%         Journal of Energy Systems CIT & X \\
%         \hline
%         Proofread & 1h \\
%         \hline
%         Submission & 1h \\
%         \hline
%     \end{tabular}
% \end{table}

% \begin{enumerate}
%     \item check the PRELIMINARIES section (there must be inconsistencies)
%     \item check the ACRONYMS (there must be repeating ones)
%     \item add the section ROA-TS
%     \item add the section ROA-CoC
%     \item in ROA-CoC section: add an ALGORITHM
%     \item add the section SIM-RES
%     \item add the section CONCLUSIONS
%     \item REFERENCES
% \end{enumerate}

% \section{Acknowledgments}
% \textcolor{red}{XYZ}

\bibliographystyle{IEEEtran}
\bibliography{Ref.bib}

\end{document}